\begin{document}

\title{Drawing non-planar graphs with rotation systems on the Klein bottle}

\author{
François Doré
\and
Enrico Formenti
}

\institute{Université Côte d'Azur, CNRS, I3S, France}

\maketitle

\begin{abstract}

This paper provides a linear time algorithm in the number of edges that,
given a simple 3-connected non-planar graph $G$ with a Klein bottle rotation system, outputs  a straight line drawing of $G$ with no crossings on the flat Klein bottle.
\end{abstract}

\keywords{Straight-line drawing \and Rotation Systems \and Graph Embedding \and Non-orientable Surfaces}

\section{Introduction}

Wagner~\cite{WAGNER_BEFORE_FARY} and F\'ary~\cite{FARY_STRAIGHTLINE_PLANAR},
independently, proved that simple planar graphs admit a straight-line representation on the plane. Mohar extended this result to flat surfaces (2-polytopes where distinct sides are identified as one) \ie the cylinder, the Möbius band, the flat torus and the flat Klein bottle~\cite{MOHAR_FARY_THEOREM_ON_OTHER_SURFACES}. However, even knowing that they exists, finding these representations is not an easy task. Read proposed an algorithm to create a straight-line representation of planar graphs given their rotation systems~\cite{READ_STRAIGHTLINE_PLANAR}. This algorithm has then been adapted to the torus by Kocay et al.~\cite{KOCAY_STRAIGHTLINE_TORUS} but, to the best of our knowledge, similar algorithms for non-orientable surfaces, especially for the Klein bottle, have not been much studied.

\begin{figure}
\centering
\begin{tikzpicture}[scale=1.5]
\begin{scope}[xshift=-1 cm]
    \draw[doublearrow] (0,0)--(1,0);
    \draw[middlearrow] (1,0)--(1,1);
    \draw[doublearrow] (0,1)--(1,1);
    \draw[middlearrow] (0,0)--(0,1);
\end{scope}
\begin{scope}[xshift=1cm]
    \draw[doublearrow] (0,0)--(1,0);
    \draw[middlearrow] (1,0)--(1,1);
    \draw[doublearrow] (0,1)--(1,1);
    \draw[middlearrow] (0,1)--(0,0);
\end{scope}
\end{tikzpicture}
\caption{Flat representation of the torus (left) and the Klein bottle (right). Pairs of identified sides of the $[0,1]$ Square have here the same number of markings on them but we will assume that these pairs always concern two opposite sides} \label{fig:torus-and-klein}
\end{figure}
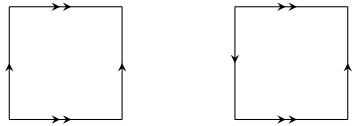

Although the torus and the Klein bottle seem similar at a first glance, there are classes of graphs that can be embedded in the torus but not in the Klein bottle and conversely. Riskin studied in detail the nonembeddability of toroidal graphs on the Klein bottle and stated, in his conclusions, the following conjecture ~\cite{RISKIN_NONEMBEDDABILITY_TORUS_KLEIN}:
\begin{quote}    
Klein bottle polyhedral maps with four disjoint homotopic noncontractible circuits of the type that collapse the Klein bottle to the pinched torus are not toroidal.
\end{quote} 
The class of graphs described by Riskin's conjecture is not the only one which has a Klein bottle embedding but are not toroidal. Indeed, Figure~\ref{fig:klein-but-not-torus} shows a Klein-embeddable graph which does not satisfy the hypothesis of Riskin's conjecture but have a minor homeomorphic to a torus obstruction listed in the database made by Myrvold and Woodcock~\cite{MYRVOLD_TORUS_OBSTRUCTIONS}.
\begin{figure}
\centering
\begin{tikzpicture}
\begin{scope}[rotate=-90,scale=.5]
    \draw[middlearrow] (4,0)--(0,0);
    \draw[middlearrow] (4,0)--(4,16);
    \draw[middlearrow] (4,16)--(0,16);
    \draw[middlearrow] (0,16)--(0,0);

    \foreach \x in {0,...,7}{
        \pgfmathtruncatemacro{\lbl}{\x+8}
        \node[gnode,minimum size=10](\x) at (1,1+2*\x){\tiny \x};
        \node[gnode,minimum size=10](\lbl) at (3,1+2*\x){\tiny \lbl};
    }

    \foreach \u/\v in {0/8,1/9,5/6,2/10,13/14,7/15} {
        \draw[gedge,densely dashed](\u)--(\v);
    }
    \foreach \u/\x/\y in {0/0/1,3/0/7,4/0/9,11/4/7,12/4/9,15/4/15} {
        \draw[gedge,densely dashed](\u)--(\x,\y);
    }
    \draw[gedge](1,0)--(0)--(1)--(2)--(3)--(4)--(5)--(13)--(12)--(11)--(10)--(9)--(8)--(3,0);
    \draw[gedge](1,16)--(7)--(6)--(14)--(15)--(3,16);
    \draw[gedge](1,16)--(7)--(6)--(14)--(15)--(3,16);
    \draw[gedge](4)--(12);
    \draw[gedge](3)--(11);
    
    \foreach \u/\x/\y in {1/0/3,2/0/5,5/0/11,6/0/13,7/0/15,8/4/1,9/4/3,10/4/5,13/4/11,14/4/13} {
        \draw[gedge](\u)--(\x,\y);
    }
\end{scope}
\begin{scope}[xshift=9.5cm,yshift=0.75cm,scale=.5]
    \foreach \x/\y/\l in {0/0/14,0/1/15,0/2/8,0/3/9,0/4/10,0/5/11,0/6/12,0/7/13,1/0/6,1/1/7,1/2/0,1/3/1,1/4/2,1/5/3,1/6/4,1/7/5}{
        \node[gnode,minimum size=10](\l) at (2*\x,-\y){\tiny \l};
    }
    \draw[gedge](14)--(15)--(8)--(9)--(10)--(11)--(12)--(13)--(5)--(4)--(3)--(2)--(1)--(0)--(7)--(6)--cycle;
    \draw[gedge](12)--(4);
    \draw[gedge](11)--(3);
    \draw[gedge](8) --(7);

    \draw[gedge] (14) to [out=45 ,in=45 ,looseness=2] (1);
    \draw[gedge] (6)  to [out=135,in=135,looseness=2] (9);
    \draw[gedge] (5)  to [out=225,in=225,looseness=2] (10);
    \draw[gedge] (13) to [out=315,in=315,looseness=2] (2);
    
\end{scope}

\end{tikzpicture}
\caption{An Klein embedding of a square grid which is non-toroidal. The subgraph homeomorphic to a torus obstruction is drawn by solid lines (left) and its embedding the plane (right).} \label{fig:klein-but-not-torus}
\end{figure}
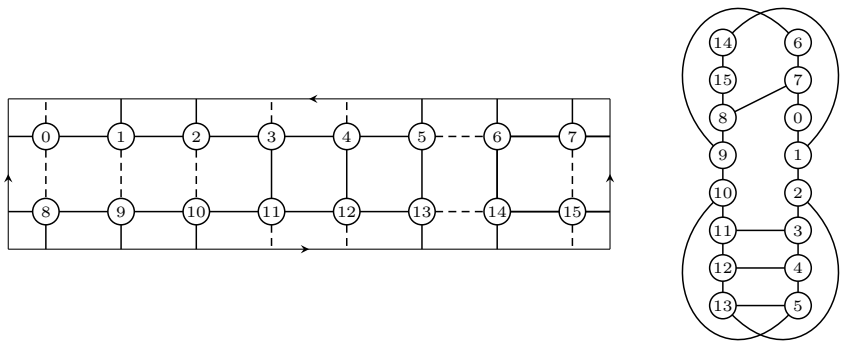
Indeed, the whole class of square grids of size $m\times n$ on the Klein bottle where $m$ runs along the non-inverted side and $n$ runs along the inverted side
will be non-toroidal when $m\geq2$ and $n\geq8$ since they contain the sub-grid of Figure~\ref{fig:klein-but-not-torus}.

These facts motivate the search for algorithms which given a
rotation system can build a straight-line drawing on the Klein bottle.
This is the main matter of the next sections. We stress that some of
the algorithms are sufficiently generic to be extendable to other
surfaces.

\section{Rotation systems and the Klein bottle}

We will use the standard definitions and concepts from graph topology (for more details see, for instance, the book of Mohar and  Tho\-mas\-sen~\cite{MOHAR_GRAPH_ON_SURFACES}). An \emph{embedding} $\emb$ of a graph $G=\structure{V,E}$ on a surface $S$ is a representation of $G$ in which vertices are points on $S$ and edges are simple curves homeomorphic to $[0,1]$ over $S$. This representation is such that endpoints of a curve associated with an edge must coincide with the endpoints of the edge, no curve representing an edge contains more than two vertices and no two curves intersect at a common interior point. A \emph{face} of an embedding is a maximal contiguous region of $S-\emb$. An embedding is called \emph{cellular} if all its faces are homeorphic to disks. The \emph{Euler's characteristic} $\chi(\emb)$ of an embedding is equal to $|V|-|E|+|F|$, where $F$ is the set of all the faces of $\emb$.

It is well known that \emph{orientable rotation systems} of a graph $G$, or \emph{pure} rotation system, induce a unique embedding of $G$ on an orientable surface $S$ up to embedding equivalence, but this relation, is not necessarily the same in the non-orientable case. Below, we define \emph{general} (or \emph{non-orientable}) rotation systems. 

\begin{definition}
Let $G=\structure{V,E}$ be a graph, its \emph{general rotation system} $\rot$ is the structure $\structure{\pi,w}$, where $\pi=\set{\pi_v\mid v \in V}$ is the ordered adjacency of each vertex, and $w=\set{w_e \mid e\in E}$ is the sign of each edge (indicating if they are twisted or not).
\end{definition}

From now on, we will consider that nodes labelled by integers in the
set \set{1,\ldots,|V|}. Thus, a node $u$ is lower than a node $v$ if the same relation holds for their labels.

The rotation system $\Pi_2(G)$ is the \emph{flip} of $\Pi_1(G)$
iff the adjacencies of all of its vertices are reversed \wrt those
of $\Pi_1(G)$.
Two orientable rotation systems $\Pi_1(G)$ and $\Pi_2(G)$ are \emph{equivalent} if letting $\Pi$ be $\Pi_1(G)$ or its flip,
for each node of $\Pi$, there exists a cyclic permutation which transforms the ordered adjacency into the one of $\Pi_2(G)$. In order to test the equivalence of two
orientable rotation systems $\Pi_1(G)$ and $\Pi_2(G)$, one can 
consider the minimal cyclic permutation of the ordered adjacency of
$\Pi_1(G)$ and $\Pi_2(G)$ and then check the adjacency of the lowest node, if the second node is greater than the last, we take the \emph{flip}.
With this procedure, we can easily check if two rotation systems of one graph are equivalent or not. However, with non-orentable rotation systems, this is no longer sufficient.

\newcommand{\flatKleinBottle}{
    \draw[beforearrow] (0,0)--(4,0);
    \draw[beforearrow] (4,0)--(4,4);
    \draw[beforearrow] (0,4)--(4,4);
    \draw[beforearrow] (0,4)--(0,0);
}

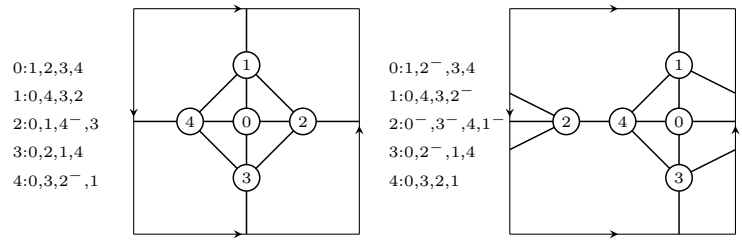
\begin{figure}
\centering
\begin{tikzpicture}
\begin{scope}[xshift=-2.5cm,scale=.75]
    \flatKleinBottle
    \node[gnode,minimum size=10](0) at (2,2) {\tiny 0};
    \node[gnode,minimum size=10](1) at (2,3) {\tiny 1};
    \node[gnode,minimum size=10](2) at (3,2) {\tiny 2};
    \node[gnode,minimum size=10](3) at (2,1) {\tiny 3};
    \node[gnode,minimum size=10](4) at (1,2) {\tiny 4};

    \draw[gedge](2,4)--(1)--(2)--(3)--(4)--(1)--(0)--(3)--(2,0);
    \draw[gedge](0,2)--(4)--(0)--(2)--(4,2);

    \node[anchor = south west]() at (-2.3,-.35+3)  {\tiny $0$:$1$,$2$,$3$,$4$};
    \node[anchor = south west]() at (-2.3,-.35+2.5){\tiny $1$:$0$,$4$,$3$,$2$};
    \node[anchor = south west]() at (-2.3,-.35+2)  {\tiny $2$:$0$,$1$,$4^-$,$3$};
    \node[anchor = south west]() at (-2.3,-.35+1.5){\tiny $3$:$0$,$2$,$1$,$4$};
    \node[anchor = south west]() at (-2.3,-.35+1)  {\tiny $4$:$0$,$3$,$2^-$,$1$};
\end{scope}
\begin{scope}[xshift=2.5cm,scale=.75]
    \flatKleinBottle
    \node[gnode,minimum size=10](0) at (3,2) {\tiny 0};
    \node[gnode,minimum size=10](1) at (3,3) {\tiny 1};
    \node[gnode,minimum size=10](2) at (1,2) {\tiny 2};
    \node[gnode,minimum size=10](3) at (3,1) {\tiny 3};
    \node[gnode,minimum size=10](4) at (2,2) {\tiny 4};

    \draw[gedge](3,4)--(1)--(0)--(3)--(3,0);
    \draw[gedge](0,2)--(2)--(4)--(0)--(4,2);
    \draw[gedge](4,2.5)--(1)--(4)--(3)--(4,1.5);
    \draw[gedge](0,2.5)--(2)--(0,1.5);

    \node[anchor = south west]() at (-2.3,-.35+3)  {\tiny $0$:$1$,$2^-$,$3$,$4$};
    \node[anchor = south west]() at (-2.3,-.35+2.5){\tiny $1$:$0$,$4$,$3$,$2^-$};
    \node[anchor = south west]() at (-2.3,-.35+2)  {\tiny $2$:$0^-$,$3^-$,$4$,$1^-$};
    \node[anchor = south west]() at (-2.3,-.35+1.5){\tiny $3$:$0$,$2^-$,$1$,$4$};
    \node[anchor = south west]() at (-2.3,-.35+1)  {\tiny $4$:$0$,$3$,$2$,$1$};
\end{scope}
\end{tikzpicture}
\caption{Two equivalent embeddings with different rotation systems. A minus symbol at the exponent means that the edge is going through the inverted side of the Klein bottle.} \label{fig:iso-embs}
\end{figure}

As shown in Figure~\ref{fig:iso-embs}, moving one node $v$ (the node $2$ in this case) through an inverted side, implies changes on the rotation system (two changes in our figure). First, the adjacency of $v$ is \emph{flipped}, meaning the order of its adjacency is reversed. Second, all the edges incident to $v$ have their signs changed. We call this operation \emph{switching} the vertex $v$.

Two rotations systems $\Pi_1(G)$ and $\Pi_2(G)$, such that $\Pi_1(G)$ can be obtained from $\Pi_2(G)$ by a sequence of flip/switch are said to be \emph{switch-equivalent}. We will denote this fact by  $\Pi_1(G)\iso \Pi_2(G)$.
This means that we can freely and independently switch any vertex without altering the rotation system. To have a rotation system independent of the switches, we can switch the nodes $v$ whose $\pi_v$ needs to be flipped, handle the changes for the edge signs accordingly and then consider the minimal cyclic permutations as for the orientable case. An outline of this formatting method is shown in Algorithm~\ref{alg:canonical}.

\begin{algorithm}[ht!]
\caption{Format rotation system}\label{alg:canonical}
\begin{algorithmic}[1]
\Require $G = \structure{V,E}$,$\rot$
\Ensure $\rot$
\For{$v \in V$}
    \State $u \gets \min(\pi_v)$
    \If {$\pi_v.successor(u)>\pi_v.predecessor(u)$}
        \For{$e \in E(v)$}
            \State $w_e \gets -w_e$
        \EndFor
        \State $\pi_v \gets flipped(\pi_v)$
    \EndIf
    \State $\pi_v \gets rotated(\pi_v)$
\EndFor
\end{algorithmic}
\end{algorithm}

The functions $\pi_v.successor(u)$ (resp. $\pi_v.predecessor(u)$) return the node before (resp. after) $u$ in the ordered adjacency of $v$. The functions $flipped(\pi_v)$ and $rotated(\pi_v)$ output respectively the reverse and the minimal cycle permutation of $\pi_v$.

The rotation system formatting algorithm runs through all the vertices and changes the sign of each edge at most twice, therefore its complexity is $O(|V|+|E|)$.

This algorithm will be particularly useful when enumerating the embeddings needed as a base for our drawing algorithm.

\section{Enumerating the embeddings}

In this section we are going to provide an enumeration algorithm
for the embeddings of a graph into a generic surface. We stress that
enumeration is not an easy task in general. Indeed, given a graph
$G$ with $n$ nodes and $m$ edges, an upper bound for the number of labelled embeddings of $G$ is given by \[2^m\prod_{0<i\leq n} (d_i-1)!\]
where $d_i$ is the degree of node $i$. Therefore, since we assumed
$d_i>2$ the above bound is certainly larger than $2^{m+n}$ and hence for large values of $m$ or $n$, labelled enumeration can be practically unfeasible. We therefore prefer to enumerate 
the unlabelled embeddings up to isomorphism and switch-equivalence. Some theoretical results exist in this domain, but no tight bounds on the number of these embeddings nor a way to generate them efficiently, to our knowledge, is known (see~\cite{CHAPUY_COUNTING_UNICELLULAR_MAPS,GROSS_ENUMERATING_GRAPH_EMBEDDINGS} for further details).

To enumerate all possible unlabelled embeddings of a graph $G$ on the Klein bottle, one can enumerate all the rotation systems of $K$, changing both adjacencies and edge twists. For each possible rotation system $\rot$, compute its genus (or Euler characteristic) with a face-walking algorithm to keep only the ones with $\chi(\rot)=0$, and store the minimal (lexicographically ordered) formatted form among all the relabellisations of $G$.

Although for small graphs, the complete enumeration can be done quite easily, few optimizations can be done to minimize the number of times a canonical form is generated. 

Firstly, it is not mandatory to test all the possible relabellisations of $\rot$ when computing its canonical form. One can simply test permutations conserving the automorphism groups\footnote{See the Nauty library of McKay for automorphism and isomorphism algorithms that have proved their worth and that are usable in practice~\cite{MCKAY_GRAPH_ISOMORPHISM}.}. 

A second step in this direction consists in exploiting a notion coming from the domain of signed graphs, namely the \emph{frustration} of a graph.


\begin{definition}
Let $G=\structure{V,E}$ be a graph and $\rot$ an embedding of $G$. The \emph{frustration} $f(\rot)$ of $\rot$ is the minimum number of twisted edges after any sequence of switches.
\end{definition}

\begin{figure}
\centering
\begin{tikzpicture}[scale=.8]
  \node[](k5_0) at (-4,0) {%
    \begin{tikzpicture}[scale=.8]
        \foreach \i in {0,1,...,4} {
            \node[gnode] (\i) at ({cos(90-360/5*\i)},{sin(90-360/5*\i)}){};
        }
        \draw[gedge,densely dotted](0)--(1)--(2)--(3)--(4)--(0)--(2)--(4)--(1)--(3)--(0);
    \end{tikzpicture}
  };
  \node[](k5_1) at (0,0) {%
    \begin{tikzpicture}[scale=.8]
        \foreach \i in {0,1,...,4} {
            \node[gnode] (\i) at ({cos(90-360/5*\i)},{sin(90-360/5*\i)}){};
        }
        \draw[gedge,densely dotted](1)--(2)--(3)--(4)--(1)--(3);
        \draw[gedge,densely dotted](4)--(2);
        \draw[gedge](1)--(0)--(3);
        \draw[gedge](2)--(0)--(4);
    \end{tikzpicture}
  };
  \node[](k5_2) at (4,0) {%
    \begin{tikzpicture}[scale=.8]
        \foreach \i in {0,1,...,4} {
            \node[gnode] (\i) at ({cos(90-360/5*\i)},{sin(90-360/5*\i)}){};
        }
        \draw[gedge,densely dotted](0)--(1)--(4)--(0);
        \draw[gedge,densely dotted](2)--(3);
        \draw[gedge](1)--(2)--(4)--(3)--(1);
        \draw[gedge](2)--(0)--(3);
    \end{tikzpicture}
  };

  \draw[-stealth] (k5_0) to [bend left = 35] (k5_1);
  \draw[-stealth] (k5_1) to [bend left = 35] (k5_2);
  \draw[-stealth] (k5_2) to [bend left = 35] (k5_1);
  \draw[-stealth] (k5_1) to [bend left = 35] (k5_0);
  \draw[-stealth] (k5_2) to [out=35,in=325,looseness=3.5] (k5_2);

\end{tikzpicture}
\caption{Three switch-equivalent signatures of $K_5$ (up to isomorphism) having a frustration of 4. 
Solid (resp., dotted) lines represent positive (resp., negative or twisted) edges. Curved arrows denote the possible switch-transitions between signatures.}
\label{fig:definition-frustration}
\end{figure}
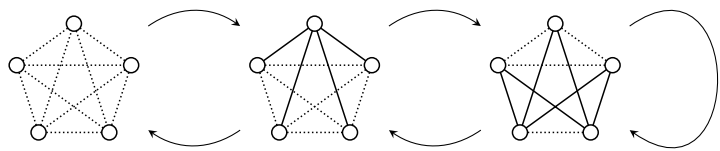

Now, we are going to define the frustration for standard (\ie unsigned) graphs.

\begin{definition}
Let $G=\structure{V,E}$ be a graph. The \emph{frustration} $f(G)$ of $G$ is the maximum of the frustrations $f(\rot)$ over all possible $\rot$ of $G$.
\end{definition}

Having the notion of frustration in mind, there is no need to check for all the configurations for edge signs. Indeed, it is well known that the frustration of any graph is at most half the number of edges~\cite{ALON_SIGNED_GRAPH_BOOK}. So only a subset of all possible signatures is needed when enumerating the labelled embeddings.


\begin{theorem} \label{th:non-cellular}
Let $G=\structure{V,E}$ be a graph with $\rot$ its embedding (potentially with twisted edges) and with $\chi(\rot)=0$. If $f(\rot)=0$, then the corresponding embedding on the Klein bottle is not cellular.
\end{theorem}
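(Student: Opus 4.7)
The plan is to exploit the definition of frustration to reduce the problem to the well-understood pure/orientable case, and then invoke the basic fact that a rotation system cellularly embeds a graph on a \emph{unique} surface (orientable iff the system is pure up to switches).

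First, I would unpack the hypothesis $f(\Pi(G))=0$: by definition of frustration there is a sequence of vertex switches transforming $\Pi(G)$ into a rotation system $\Pi'(G)$ in which every edge has positive sign, \ie $\Pi'(G)$ is a pure (orientable) rotation system. Since switching a vertex is just a local change of drawing convention (flipping the cyclic order at $v$ and toggling the signs of the edges incident to $v$, as described in the text around Figure~\ref{fig:iso-embs}), the face-tracing algorithm returns the same multiset of face lengths on $\Pi(G)$ and $\Pi'(G)$, so in particular $\chi(\Pi'(G))=\chi(\Pi(G))=0$.

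Next, I would apply the classical fact from topological graph theory (Mohar--Thomassen~\cite{MOHAR_GRAPH_ON_SURFACES}) that a pure rotation system uniquely determines a \emph{cellular} embedding on some \emph{orientable} surface, and that surface is characterized by its Euler characteristic. With $\chi(\Pi'(G))=0$ the only candidate is the torus. Combined with the previous step, this says that the unique surface on which the combinatorial map $\Pi(G)$ cellularly embeds $G$ is the torus.

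Finally, I would conclude by contrast with the Klein bottle. Suppose, for contradiction, that the drawing induced by $\Pi(G)$ on the Klein bottle were cellular. Then the Klein bottle would be a surface carrying a cellular embedding realizing $\Pi(G)$; but cellular embeddings are determined by their rotation system up to homeomorphism, and we have just shown that this surface must be the (orientable) torus, not the (non-orientable) Klein bottle. Hence the embedding on the Klein bottle cannot be cellular, which is what we wanted. The only delicate point I anticipate is being precise about the fact that switch-equivalent rotation systems define the \emph{same} combinatorial embedding, so that ``the surface of $\Pi(G)$'' is well-defined; this is standard but should be stated explicitly to make the chain of implications rigorous.
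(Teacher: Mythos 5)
Your proof is correct, but it takes a genuinely different route from the paper. The paper argues concretely in the flat $[0,1]^2$ model of the Klein bottle: after the switches that eliminate all twisted edges, no edge of the embedding crosses the inverted side of the square, so the inverted side (a non-contractible curve) lies entirely inside a single face, and a face containing a non-contractible curve cannot be a disk. You instead invoke the classical Heffter--Edmonds--Ringel--Stahl classification: a rotation system determines a unique cellular embedding up to homeomorphism, orientable iff the system is switch-equivalent to a pure one, with the surface fixed by the Euler characteristic computed by face-tracing. Since $f(\rot)=0$ gives a pure system after switches and $\chi=0$ forces the torus, the unique cellular realization of $\rot$ lives on the torus, not the Klein bottle. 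Your argument is somewhat more rigorous (the paper's final sentence asserts non-cellularity of the face containing the inverted side without spelling out why that region fails to be a disk), and it generalizes cleanly to other non-orientable surfaces, whereas the paper's picture-based argument is specific to the square model of the Klein bottle. The one place you rightly flag as delicate --- that switch-equivalence preserves the face structure and hence $\chi$, so that ``the surface of $\rot$'' is well-defined --- is indeed the load-bearing fact and deserves explicit mention, which you do. Both proofs are essentially appeals to the same underlying fact (frustration zero means the signature is trivial up to switching), but yours packages it through the abstract embedding theorem rather than a direct geometric construction.
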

\begin{proof}
Let $G$ be a graph satisfying the hypothesis. Assume $f(\rot)=0$. 
Then, there exists a way to move the nodes of $G$ on the Klein bottle in such a way that there are no twisted edges anymore, \ie we can produce an embedding $\Pi'(G)$ in which all the edges are either fully contained in $[0,1]^2$ or passing through the non-inverted identified sides. Thus, there is at least one face, the one containing the inverted side, which is not homeomorphic to a disk.
\end{proof}

Finally, the following theorem provides a criterion to exclude some rotation systems which are, in practice, not embeddable in the Klein bottle.

\begin{theorem} \label{th:invalid-rotation-system}
Let $G=\structure{V,E}$ be a non-planar graph with $\rot$. If $\chi(\rot)=0$ and $f(\rot)=0$, then there is no embedding of $G$ corresponding to $\rot$ in the Klein bottle.
\end{theorem}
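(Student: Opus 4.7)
The plan is to argue by contradiction, leveraging Theorem~\ref{th:non-cellular} and then reducing the Klein bottle to a planar surface. Suppose, toward a contradiction, that $G$ admits an embedding on the Klein bottle whose rotation system is $\Pi(G)$. Since $f(\Pi(G))=0$, the definition of frustration supplies a sequence of switches turning $\Pi(G)$ into a switch-equivalent rotation system $\Pi'(G)$ whose edges are all positive. Because a switch corresponds geometrically to moving a vertex across the inverted identified side of the flat Klein bottle, $\Pi(G)$ and $\Pi'(G)$ describe the same underlying embedding, so $\Pi'(G)$ also realises $G$ on the Klein bottle.

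Next, exactly as in the proof of Theorem~\ref{th:non-cellular}, the fact that every edge of $\Pi'(G)$ is positive means that no edge of this embedding crosses the inverted side of $[0,1]^2$; in particular an open collar neighbourhood of the inverted side contains no vertex and no edge. I would then cut the flat Klein bottle along that inverted side: this removes the only non-orientable identification and leaves the square equipped solely with its non-inverted pair of identified sides, namely a cylinder. The embedding of $G$ descends unchanged to this cylinder.

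To conclude, I would invoke the standard fact that a cylinder embeds in the sphere (equivalently, in the plane); composing the two embeddings realises $G$ as a planar graph, contradicting the hypothesis that $G$ is non-planar. The hypothesis $\chi(\Pi(G))=0$ is used only to ensure, via the statement of Theorem~\ref{th:non-cellular}, that the geometric picture of edges avoiding the inverted side applies to our hypothetical Klein bottle embedding. The subtlest step, and the one I would write out most carefully, is the claim that a switch does not alter the underlying embedding, only its combinatorial encoding; once that is in place, the cut-and-flatten argument and the resulting planarity are routine.
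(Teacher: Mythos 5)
Your proof takes essentially the same route as the paper's: both use $f(\rot)=0$ (via Theorem~\ref{th:non-cellular}) to place the inverted side in the interior of a non-disk face, cut the Klein bottle open into a cylinder along it, and conclude that $G$ embeds in the plane, contradicting non-planarity. The extra detail you give about switches preserving the underlying embedding and about the collar neighbourhood of the inverted side merely re-derives the content of Theorem~\ref{th:non-cellular} rather than changing the overall argument.
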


\begin{proof}
By Theorem~\ref{th:non-cellular}, if $f(\rot)=0$, there is a face homeomorphic to a cylinder. By cutting through it, the supposed embedding of $G$ becomes a cylindrical one. However, each cylindrical embedding is necessarily planar (by identifying one of the two borders as one point) which is a contradiction. 
\end{proof}

\begin{algorithm}[ht!]
\caption{Enumerating distinct Klein bottle embeddings}\label{alg:enumeration}
\begin{algorithmic}[1]
\Require $G = \structure{V,E}$
\State $X_{\text{False}} \gets set()$
\State $X_{\text{All}} \gets set()$
\For{$edgeMask=0$ \textbf{to} $2^{|E|}-1$} \label{alg:enum:sign-loop}
    \For{$(\sigma_0,\ldots,\sigma_{|V|-1}) \in \prod_{0<i\leq|V|} S_{d_i-1} $} \label{alg:enum:perm-loop}
        \State $\rot.clear()$ \label{alg:enum:construction-start}
        \State $sortedEdges \gets sorted(E)$
        \For{$i=0$ \textbf{to} $|E|-1$}
            \State $e \gets sortedEgdes[i]$
            \State $w_e \gets 1|-(edgeMask>>i\&1)$
        \EndFor
        \For{$v \in V$}
            \State $u_{min} \gets min(V(v))$
            \State $\pi_v \gets [u_{min}]+sorted(V(v)\backslash \set{u_{min}})^{\sigma_v}$ \label{alg:enum:perm-application}
        \EndFor \label{alg:enum:construction-end}
        \If{$\chi(\rot) = 0$}
            \State $\Pi \gets min(\set{(\rot^{\sigma}).format() \mid \sigma \in S_{|V|}})$ \label{alg:enum:canonical}
            \If{$edgeMask = 0$}
                \State $X_{\text{False}}.add(\Pi)$
            \EndIf
            \State $X_{\text{All}}.add(\Pi)$
        \EndIf
    \EndFor
\EndFor
\State $\textbf{return } X_{\text{All}} \backslash X_{\text{False}}$
\end{algorithmic}
\label{algo:enum-distinct-embeddings}
\end{algorithm}

At this point, all the material for the enumeration
algorithm are ready. Let us comment how
Algorithm~\ref{alg:enumeration} works. First of all, remark that $S_n$ (line~\ref{alg:enum:perm-loop}) denotes the set of permutations of $n$ elements and $X^{\sigma}$ (lines~\ref{alg:enum:perm-application} and~\ref{alg:enum:canonical}) denotes the application of the permutation $\sigma$ on the set $X$. Lines~\ref{alg:enum:sign-loop}-\ref{alg:enum:perm-loop} build the combinatorial objects needed for the enumeration and lines~\ref{alg:enum:construction-start} through~\ref{alg:enum:construction-end} construct a rotation system given those combinatorial objects. Line~\ref{alg:enum:canonical} finds the canonical form of the constructed rotation system. The set $X_{\text{False}}$ stores the pseudo-valid (\ie with Euler's characteristic equal $0$) systems but with a frustration of $0$, \ie the ones whose a corresponding embedding in the Klein bottle does not exists (see Theorem~\ref{th:invalid-rotation-system}), and $X_{\text{All}}$ store the pseudo-valid ones regardless of their frustration. For sake of clarity, the two optimisations detailed above are not integrated in the algorithm but they could intervene respectively lines~\ref{alg:enum:sign-loop} and~\ref{alg:enum:canonical}.

\section{The straight-line drawing algorithm}

Our drawing algorithm uses the idea of \emph{embedding extension}:  extract a subgraph which we know how to draw and then extend the drawing by adding the missing nodes and edges (see~\cite{MYRVOLD_TORUS_OBSTRUCTIONS,JUVAN_MOHAR_ALGORITHM_TORUS_N3,DEMOUCRON_MALGRANGE_PERTUISET_DMP} for instance).
Since the main motivation of the algorithm is to draw non-planar and even non-toroidal graphs by a straight-line drawing, we can hence take a Kuratowski subgraph, \ie $K_5$ or $K_{3,3}$, as a base.
However, to be sure to be able to draw this subgraph according to the given rotation system, we have first to determine all the ways to draw these two graphs on the Klein bottle. We will also make the standard assumption that our input graph is 3-connected. If not, dummy edges can be added to make $G$ become 3-connected, then removed at the very end.

We introduce a notation for the drawing of a graph $G$, usable both in theory and implementation-wise.

\begin{definition}
Let $G=\structure{V,E}$ be a graph, we define its drawing $\Gamma(G)=\structure{\gamma,\delta}$, with $\gamma=\set{\gamma_v \in [0,1]^2 \mid v \in V}$, the coordinate of the node $v$ in the Klein bottle, and $\delta=\set{\delta_e \in \Z^2 \mid e\in E}$, the shifts of each edge defining if they go through one or multiple identified sides of the Klein bottle.
\end{definition}

One coordinate, $x$ or $y$ is chosen to correspond to the inverted side, without loss of generality, say $x$. Let consider a node $v$ with coordinates $(\gamma_{x_v},\gamma_{y_v})$ having a neighbour $u$ linked by an edge with a $(\delta_x,\delta_y)$ shift. To compute the coordinates of $u$ relative to $v$, we will consider two cases:

\begin{equation*}
\gamma_{v \rightarrow u}=
\begin{cases}
(\delta_x+\gamma_{x_u},\delta_y+\gamma_{y_u}) & \text{$\delta_x$ is even} \\
(\delta_x+\gamma_{x_u},\delta_y+1-\gamma_{y_u}) & \text{$\delta_x$ is odd}
\end{cases}
\end{equation*}

These relative coordinates are mainly used to draw the edges between the nodes or to compute the center of a face. This can intuitively be extended to non-adjacent nodes, in which case we sum up the shifts of the edges on the path between the nodes. We assume in the following that the paths in question are clear form the context.
\smallskip

As explained before, the base of our drawings are the Kuratowski subgraphs present in our input graph. As a preprocessing step, all the possible embeddings of $K_5$ and $K_{3,3}$
are precomputed in advance. This step is executed only once by Algorithm~\ref{algo:enum-distinct-embeddings}, to have thereafter a usable database for our main algorithm. After this step we are left with the $13$ embeddings ($11$ for $K_5$ and $2$ for $K_{3,3}$) shown in Figure~\ref{fig:emb-data}. We will denote this set of embeddings $\Omega$. We can note that for $K_5$, there are 5 more embeddings on this surface than on the torus, described by Myrvold~\cite{MYRVOLD_TORUS_OBSTRUCTIONS}. We stress that these are the only possible ways, up to translation of the nodes, to draw these graphs on the Klein bottle, meaning that when extracting our subgraph, the given rotation system of this subgraph will necessarily correspond to exactly one of these drawings, again up to switch-equivalence.

\newcommand{\drawNodesCompleteInKlein}[5]{%
    \node[gnode](#1) at (2,2){};
    \node[gnode](#2) at (1,3){};
    \node[gnode](#3) at (3,3){};
    \node[gnode](#4) at (3,1){};
    \node[gnode](#5) at (1,1){};
}


\begin{figure}
\centering
\begin{tikzpicture}[scale=1,decoration={markings,mark=at position 0.48 with {\arrow{stealth}}}]
\begin{scope}[xshift=0cm,yshift=0cm,scale=.5]
    \flatKleinBottle
    \drawNodesCompleteInKlein{0}{1}{3}{2}{4}
    \draw[gedge](0)--(1)--(3)--(0)--(4)--(2)--(0);
    \draw[gedge](1,4)--(1)--(2,4);
    \draw[gedge](3,4)--(3)--(4,3);
    \draw[gedge](0,1)--(4)--(1,0);
    \draw[gedge](2,0)--(2)--(3,0);
\end{scope}
\begin{scope}[xshift=2.5cm,yshift=0cm,scale=.5]
    \flatKleinBottle
    \drawNodesCompleteInKlein{1}{2}{3}{0}{4}
    \draw[gedge](2,4)--(2)--(4)--(0)--(3)--(2)--(1)--(4)--(0,1);
    \draw[gedge](2,0)--(0)--(1)--(3)--(4,3);
\end{scope}
\begin{scope}[xshift=5cm,yshift=0cm,scale=.5]
    \flatKleinBottle
    \drawNodesCompleteInKlein{0}{1}{2}{3}{4}
    \draw[gedge](0,2)--(4)--(0)--(2)--(1)--(0)--(3)--(4,2);
    \draw[gedge](1,4)--(1)--(2,4);
    \draw[gedge](3,4)--(2)--(4,3);
    \draw[gedge](0,1)--(4)--(1,0);
    \draw[gedge](2,0)--(3)--(3,0);
\end{scope}
\begin{scope}[xshift=7.5cm,yshift=0cm,scale=.5]
    \flatKleinBottle
    \drawNodesCompleteInKlein{1}{4}{3}{2}{0}
    \draw[gedge](0,2)--(4)--(1)--(2)--(4,1);
    \draw[gedge](2,0)--(0)--(1)--(3)--(2,4);
    \draw[gedge](0,3)--(4)--(1,4);
    \draw[gedge](1,0)--(0)--(2)--(3)--(4,2);
\end{scope}
\begin{scope}[xshift=0cm,yshift=-2.5cm,scale=.5]
    \flatKleinBottle
    \drawNodesCompleteInKlein{1}{4}{3}{2}{0}
    \draw[gedge](0,3)--(4)--(0)--(2)--(3)--(4,2);
    \draw[gedge](0,2)--(4)--(1)--(2)--(4,1);
    \draw[gedge](2,0)--(0)--(1)--(3)--(2,4);
\end{scope}
\begin{scope}[xshift=2.5cm,yshift=-2.5cm,scale=.5]
    \flatKleinBottle
    \node[gnode](0) at (2,1){};
    \node[gnode](1) at (2,2){};
    \node[gnode](2) at (3,1){};
    \node[gnode](3) at (2,3){};
    \node[gnode](4) at (1,1){};
    \draw[gedge](0,3)--(3)--(4,3);
    \draw[gedge](0,1)--(4)--(0)--(2)--(4,1);
    \draw[gedge](2,0)--(0)--(1)--(3)--(2,4);
    \draw[gedge](0,0)--(4)--(1)--(2)--(4,0);
\end{scope}
\begin{scope}[xshift=5cm,yshift=-2.5cm,scale=.5]
    \flatKleinBottle
    \drawNodesCompleteInKlein{0}{4}{1}{3}{2}
    \draw[gedge](0,3)--(4)--(0)--(3)--(3,0);
    \draw[gedge](0,1)--(2)--(0)--(1)--(3,4);
    \draw[gedge](4,3)--(1)--(4)--(2)--(3)--(4,1);
\end{scope}
\begin{scope}[xshift=7.5cm,yshift=-2.5cm,scale=.5]
    \flatKleinBottle
    \drawNodesCompleteInKlein{0}{4}{1}{3}{2}
    \draw[gedge](0,3)--(4)--(1)--(4,3);
    \draw[gedge](0,1)--(2)--(3)--(4,1);
    \draw[gedge](1,0)--(2)--(0)--(1)--(3,4);
    \draw[gedge](3,0)--(3)--(0)--(4)--(1,4);
\end{scope}
\begin{scope}[xshift=0cm+1.25cm,yshift=-5cm,scale=.5]
    \flatKleinBottle
    \drawNodesCompleteInKlein{0}{4}{1}{3}{2}
    \draw[gedge](0,4)--(4)--(0)--(3)--(4,2);
    \draw[gedge](1,0)--(2)--(0)--(1)--(3,4);
    \draw[gedge](2,4)--(1)--(4,4);
    \draw[gedge](0,2)--(2)--(2,0);
    \draw[gedge](3,0)--(3)--(4,1);
    \draw[gedge](0,3)--(4)--(1,4);
\end{scope}
\begin{scope}[xshift=2.5cm+1.25cm,yshift=-5cm,scale=.5]
    \flatKleinBottle
    \drawNodesCompleteInKlein{0}{4}{1}{3}{2}
    \draw[gedge](0,0)--(2)--(0)--(1)--(4)--(0)--(3)--(4,0);
    \draw[gedge](3,4)--(1)--(4,3);
    \draw[gedge](0,1)--(2)--(1,0);
    \draw[gedge](3,0)--(3)--(4,1);
    \draw[gedge](0,3)--(4)--(1,4);
\end{scope}
\begin{scope}[xshift=5cm+1.25cm,yshift=-5cm,scale=.5]
    \flatKleinBottle
    \drawNodesCompleteInKlein{0}{2}{1}{3}{4}
    \draw[gedge](0,3)--(2)--(0)--(1)--(2)--(4)--(0)--(3)--(4,2);
    \draw[gedge](3,4)--(1)--(4,3);
    \draw[gedge](3,0)--(3)--(4,1);
    \draw[gedge](0,1)--(4)--(0,2);
\end{scope}
\begin{scope}[xshift=10cm,yshift=-1.25cm,scale=.5]
    \flatKleinBottle
    \node[gnode](0) at (1,3){};
    \node[gnode](1) at (2,3){};
    \node[gnode](2) at (3,3){};
    \node[gnode](3) at (1,1){};
    \node[gnode](4) at (2,1){};
    \node[gnode](5) at (3,1){};
    \draw[gedge](2,4)--(0)--(1)--(2)--(5)--(2,0);
    \draw[gedge](0,1)--(3)--(4)--(5);
    \draw[gedge](4,3)--(2);
    \draw[gedge](0)--(3);
    \draw[gedge](1)--(4);
\end{scope}
\begin{scope}[xshift=10cm,yshift=-2.5cm-1.25cm,scale=.5]
    \flatKleinBottle
    \node[gnode](0) at (1,3){};
    \node[gnode](1) at (2,3){};
    \node[gnode](2) at (3,3){};
    \node[gnode](3) at (1,1){};
    \node[gnode](4) at (2,1){};
    \node[gnode](5) at (3,1){};
    \draw[gedge](0,3)--(0)--(1)--(2)--(4,3);
    \draw[gedge](0,1)--(3)--(4)--(5)--(4,1);
    \draw[gedge](2,4)--(1);
    \draw[gedge](2,0)--(4);
    \draw[gedge](0)--(3);
    \draw[gedge](2)--(5);
\end{scope}

\end{tikzpicture}
\caption{Unlabelled embeddings of $K_5$ and $K_{3,3}$ on the Klein bottle.} \label{fig:emb-data}
\end{figure}

We setup the drawings of these embeddings to be convex (see Figure~\ref{fig:emb-data-convex}), in order to use a Tutte-like algorithm to
place the remaining nodes. Note that having strictly convex embeddings as starting base is not mandatory especially as the extracted subgraph homeomorphic to $K_5$ or $K_{3,3}$ could potentially have nodes of degree 2 which would lead to non-strictly convex embeddings.
The following Theorem ensures that the final drawing is equivalent to the rotation system given in input, since it proves that for each of the locally planar and orientable faces there exists a unique non-intersecting drawing.

\begin{figure}
\centering
\begin{tikzpicture}[scale=1,decoration={markings,mark=at position 0.48 with {\arrow{stealth}}}]
\begin{scope}[xshift=0cm,scale=.5]
    \flatKleinBottle
    \node[gnode](0) at (2,2){};
    \node[gnode](1) at (2,3){};
    \node[gnode](3) at (3,3){};
    \node[gnode](2) at (2,1){};
    \node[gnode](4) at (1,1){};
    \draw[gedge](0)--(1)--(3)--(0)--(4)--(2)--(0);
    \draw[gedge](1.5,4)--(1)--(2,4);
    \draw[gedge](2.5,4)--(3)--(4,3);
    \draw[gedge](0,1)--(4)--(1.5,0);
    \draw[gedge](2,0)--(2)--(2.5,0);
\end{scope}
\begin{scope}[xshift=2.5cm,scale=.5]
    \flatKleinBottle
    \node[gnode](1) at (2,2){};
    \node[gnode](2) at (2,3){};
    \node[gnode](3) at (3,2){};
    \node[gnode](0) at (2,1){};
    \node[gnode](4) at (1,2){};
    \draw[gedge](2,4)--(2)--(4)--(0)--(3)--(2)--(1)--(4)--(0,2);
    \draw[gedge](2,0)--(0)--(1)--(3)--(4,2);
\end{scope}
\begin{scope}[xshift=5cm,scale=.5]
    \flatKleinBottle
    \node[gnode](0) at (2,2){};
    \node[gnode](1) at (2,3){};
    \node[gnode](2) at (3,3){};
    \node[gnode](3) at (3,1){};
    \node[gnode](4) at (1,1){};
    \draw[gedge](0,2)--(4)--(0)--(2)--(1)--(0)--(3)--(4,2);
    \draw[gedge](1.5,4)--(1)--(2.5,4);
    \draw[gedge](3,4)--(2)--(4,3);
    \draw[gedge](0,1)--(4)--(1.5,0);
    \draw[gedge](2.5,0)--(3)--(3,0);
\end{scope}
\begin{scope}[xshift=7.5cm,scale=.5]
    \flatKleinBottle
    \node[gnode](1) at (2,2){};
    \node[gnode](4) at (1,3){};
    \node[gnode](3) at (3,3){};
    \node[gnode](2) at (3,2){};
    \node[gnode](0) at (2,1){};
    \draw[gedge](0,2)--(4)--(1)--(2)--(4,1.5);
    \draw[gedge](2.5,0)--(0)--(1)--(3)--(2.5,4);
    \draw[gedge](0,2.5)--(4)--(1.5,4);
    \draw[gedge](1.5,0)--(0)--(2)--(3)--(4,2);
\end{scope}
\begin{scope}[xshift=0cm,yshift=-2.5cm,scale=.5]
    \flatKleinBottle
    \node[gnode](1) at (2,2){};
    \node[gnode](3) at (2.5,3){};
    \node[gnode](2) at (3,2){};
    \node[gnode](0) at (2,1){};
    \node[gnode](4) at (1,1.5){};
    \draw[gedge](0,1.33)--(4)--(0)--(2)--(3)--(4,2.66);
    \draw[gedge](0,1.75)--(4)--(1)--(2)--(4,2.25);
    \draw[gedge](2.5,0)--(0)--(1)--(3)--(2.5,4);
\end{scope}
\begin{scope}[xshift=2.5cm,yshift=-2.5cm,scale=.5]
    \flatKleinBottle
    \node[gnode](0) at (2,1){};
    \node[gnode](1) at (2,2){};
    \node[gnode](2) at (3,1){};
    \node[gnode](3) at (2,3){};
    \node[gnode](4) at (1,1){};
    \draw[gedge](0,3)--(3)--(4,3);
    \draw[gedge](0,1)--(4)--(0)--(2)--(4,1);
    \draw[gedge](2,0)--(0)--(1)--(3)--(2,4);
    \draw[gedge](0,0)--(4)--(1)--(2)--(4,0);
\end{scope}
\begin{scope}[xshift=5cm,yshift=-2.5cm,scale=.5]
    \flatKleinBottle
    \node[gnode](0) at (2.5,2){};
    \node[gnode](4) at (1,2.5){};
    \node[gnode](1) at (3,3){};
    \node[gnode](3) at (3,1){};
    \node[gnode](2) at (1,1.5){};
    \draw[gedge](0,2.75)--(4)--(0)--(3)--(3,0);
    \draw[gedge](0,1.25)--(2)--(0)--(1)--(3,4);
    \draw[gedge](4,2.75)--(1)--(4)--(2)--(3)--(4,1.25);
\end{scope}
\begin{scope}[xshift=7.5cm,yshift=-2.5cm,scale=.5]
    \flatKleinBottle
    \drawNodesCompleteInKlein{0}{4}{1}{3}{2}
    \draw[gedge](0,3)--(4)--(1)--(4,3);
    \draw[gedge](0,1)--(2)--(3)--(4,1);
    \draw[gedge](1,0)--(2)--(0)--(1)--(3,4);
    \draw[gedge](3,0)--(3)--(0)--(4)--(1,4);
\end{scope}
\begin{scope}[xshift=0cm+1.25cm,yshift=-5cm,scale=.5]
    \flatKleinBottle
    \drawNodesCompleteInKlein{0}{4}{1}{3}{2}
    \draw[gedge](0,4)--(4)--(0)--(3)--(4,2);
    \draw[gedge](1,0)--(2)--(0)--(1)--(3,4);
    \draw[gedge](2,4)--(1)--(4,4);
    \draw[gedge](0,2)--(2)--(2,0);
    \draw[gedge](3,0)--(3)--(4,1);
    \draw[gedge](0,3)--(4)--(1,4);
\end{scope}
\begin{scope}[xshift=2.5cm+1.25cm,yshift=-5cm,scale=.5]
    \flatKleinBottle
    \drawNodesCompleteInKlein{0}{4}{1}{3}{2}
    \draw[gedge](0,0)--(2)--(0)--(1)--(4)--(0)--(3)--(4,0);
    \draw[gedge](3,4)--(1)--(4,3);
    \draw[gedge](0,1)--(2)--(1,0);
    \draw[gedge](3,0)--(3)--(4,1);
    \draw[gedge](0,3)--(4)--(1,4);
\end{scope}
\begin{scope}[xshift=5cm+1.25cm,yshift=-5cm,scale=.5]
    \flatKleinBottle
    \node[gnode](0) at (2.25,1.75){};
    \node[gnode](2) at (1.5,2.5){};
    \node[gnode](1) at (2,3){};
    \node[gnode](3) at (3,1){};
    \node[gnode](4) at (1,2){};
    \draw[gedge](3)--(0)--(2)--(4)--(0)--(1)--(2);
    \draw[gedge](2.5,4)--(1)--(4,2.33);
    \draw[gedge](0,2.5)--(4)--(0,1.66);
    \draw[gedge](4,1.2)--(3)--(2.5,0);
    \draw[gedge](2)--(0,2.8);
    \draw[gedge](3)--(4,1.5);
\end{scope}
\begin{scope}[xshift=10cm,yshift=-1.25cm,scale=.5]
    \flatKleinBottle
    \node[gnode](0) at (2,3){};
    \node[gnode](1) at (2.3,2.3){};
    \node[gnode](2) at (3,2){};
    \node[gnode](3) at (1,2){};
    \node[gnode](4) at (1.7,1.7){};
    \node[gnode](5) at (2,1){};
    \draw[gedge](2,4)--(0)--(1)--(2)--(5)--(2,0);
    \draw[gedge](0,2)--(3)--(4)--(5);
    \draw[gedge](4,2)--(2);
    \draw[gedge](0)--(3);
    \draw[gedge](1)--(4);
\end{scope}
\begin{scope}[xshift=10cm,yshift=-3.75cm,scale=.5]
    \flatKleinBottle
    \node[gnode](0) at (1,2.7){};
    \node[gnode](1) at (2,3.2){};
    \node[gnode](2) at (3,2.7){};
    \node[gnode](3) at (1,1.3){};
    \node[gnode](4) at (2,0.8){};
    \node[gnode](5) at (3,1.3){};
    \draw[gedge](0,2.7)--(0)--(1)--(2)--(4,2.7);
    \draw[gedge](0,1.3)--(3)--(4)--(5)--(4,1.3);
    \draw[gedge](2,4)--(1);
    \draw[gedge](2,0)--(4);
    \draw[gedge](0)--(3);
    \draw[gedge](2)--(5);
\end{scope}

\end{tikzpicture}
\caption{Strictly convex unlabelled embeddings of $K_5$ and $K_{3,3}$ on the Klein bottle.} \label{fig:emb-data-convex}
\end{figure}


\begin{theorem} \label{th:tutte-unique}
Let $G=\structure{V,E}$ be a graph, and $f$ a convex 2-cell region of the Klein bottle bounded by a fixed cycle $C$ of G. Let $H$ be the subset of the nodes of $G$ that has to be embed in $f$. If $G$ is 3-connected, then there exist a unique planar embedding of the subgraph of $G$ induced by $H \cup C$ where all the nodes of $H$ are in the same side of $C$.
\end{theorem}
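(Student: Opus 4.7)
The plan is to reduce the statement to Whitney's classical uniqueness theorem for planar embeddings of 3-connected graphs. Since $f$ is a 2-cell region bounded by $C$, it is homeomorphic to a closed disk with $C$ mapping to its boundary. The problem then becomes: given the graph $G' := G[H \cup C]$ with $C$ as a prescribed outer face and $H$ on the interior side, show that the planar embedding is unique. Existence is immediate, since the restriction of the Klein bottle embedding of $G$ to the closure of $f$ provides one such planar embedding of $G'$.

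For uniqueness, I would invoke Whitney's theorem: every 3-connected planar graph admits a unique combinatorial embedding up to reflection. Fixing both $C$ as the outer face and the side of $C$ on which $H$ lies eliminates the reflection ambiguity. The key technical step is verifying that $G'$, possibly augmented by an apex vertex joined to every vertex of $C$ to close the disk into a sphere, is 3-connected. To this end, I would argue that any 2-cut in the augmented graph lifts to a 2-cut in $G$: if $\{u,v\}$ separated vertices $a, b$ in $G'$, then by 3-connectivity of $G$ there would be a path from $a$ to $b$ in $G$ avoiding $\{u,v\}$; any portion of such a path lying outside $f$ must enter and exit $f$ through $C$, so the sub-paths within the closure of $f$, together with arcs of $C$, would reconnect $a$ and $b$ in $G' \setminus \{u,v\}$, a contradiction.

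The hardest part will be the careful case analysis of where the cut vertices $u,v$ lie relative to the boundary cycle $C$, particularly the delicate case when both $u,v \in C$, which splits $C$ into two arcs whose mutual reachability inside the disk is the crux of the argument; there, one must rule out that external paths (which exist in $G$ but not in $G'$) are the only bridges between the two arcs. Degenerate situations (such as $|H|=0$, where $G' = C$ and uniqueness is trivial, or vertices of $H$ of small degree in $G'$ whose position is forced by their few neighbors) would be treated separately. Once the 3-connectivity of the augmented graph is established, Whitney's theorem yields the claimed uniqueness and completes the proof.
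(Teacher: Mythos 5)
Your approach is genuinely different from the paper's. You reduce to Whitney's uniqueness theorem for 3-connected planar graphs by compactifying the disk $f$ with an apex vertex joined to all of $C$, and then you argue that the augmented graph inherits 3-connectivity from $G$ by lifting any would-be 2-cut. The paper instead argues by direct contradiction at the level of a single vertex: it supposes some vertex $u$ could be placed in two distinct faces, picks three neighbours $w_1,w_2,w_3$ (guaranteed by 3-connectivity) lying on the boundary of both candidate faces, and uses a second application of 3-connectivity to produce a path from the ``middle'' neighbour $w_2$ to $C$ that the edges $(u,w_1)$ and $(u,w_3)$ would be forced to cross. So the paper never invokes Whitney's theorem and never changes the graph; it is a more local and elementary argument, at the cost of being terser. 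Your route is cleaner modulo the 3-connectivity lemma, and it is a standard technique, so it buys you a modular proof backed by a classical theorem rather than a bespoke crossing argument.

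One point you should be aware of, which the paper handles explicitly and your proposal does not: the ``cycle'' $C$ bounding a face of a Klein-bottle embedding need not be a simple cycle; a vertex can appear more than once on the facial walk. The paper's proof opens by observing exactly this and noting that $\pi_v$ disambiguates which occurrence of a repeated vertex a given incident edge attaches to. In your construction, if $C$ has repeated vertices then joining a single apex to ``every vertex of $C$'' is not the same as joining it to every corner of the facial walk: you would either need to subdivide $C$ (or use multi-edges from the apex, which puts you outside the scope of the simple-graph form of Whitney's theorem) or otherwise re-encode the disambiguation that the rotation system provides. Your lifting-of-2-cuts argument also needs to be re-examined in this setting, since a repeated vertex on $C$ could be a cut vertex of $G[H\cup C]$ without contradicting 3-connectivity of $G$. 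You flag the $u,v\in C$ case as the delicate one, and that is correct, but the repeated-occurrence subtlety is an additional wrinkle on top of it. With that handled, your approach should go through.

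Finally, a smaller caveat: you take $G' = G[H\cup C]$, the induced subgraph, but edges of $G$ between two vertices of $C$ that are embedded \emph{outside} $f$ should not be included in $G'$; what is wanted is the subgraph whose edges lie in the closure of $f$. As written this is a mismatch with the theorem statement's wording, but the paper's own statement is loose on the same point, so this is more a matter of aligning with the intended reading than an error on your part.
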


\begin{proof}
First, let $v$ be a node on $C$ and $u$ a node adjacent to $v$. The node $v$ can potentially appear multiple times on $C$, however each of the possible attachments corresponds to a distinct position in $\pi_v$. Knowing $\pi_v$ allows to know on which occurrence of $v$ in $C$ $u$ must be attached. Thus, the ambiguity on the multiplicity of any node on the boundary of $f$ can be cleared up. 
Let assume now the $v$ can be embedded in two different faces, the one dictated by $\rot$ and another face of $\Pi(G-u)$ which contradicts $\rot$. 
Since $G$ is 3-connected, $u$ has at least three neighbours, let say $w_1$,$w_2$ and $w_3$ be those neighbours. If both embeddings are locally planar, both faces have on their boundary all the $w_i$. Let $P$ be a path in $G-u$ starting and ending on $C$ and containing all the $w_i$. Wlog, let $w_1$ and $w_3$ be such $w_2$ is included in the subpath of $P$ starting from $w_1$ and ending with $w_3$. Let assume removing the nodes $w_1$ and $w_3$ does not disconnect $G$ since it is 3-connected, then there must be a path connecting $w_2$ to $C$. However if $u$ can be embedded in both faces, the edges $(u,w_1)$ and $(u,w_3)$ must intersect this path, which is a contradiction.
\end{proof}

\begin{algorithm}[ht!]
\caption{Drawing from Rotation System} \label{alg:straight-line-klein}
\begin{algorithmic}[1]
\Require $G = \structure{V,E},\rot,\Omega$
\Ensure $\Gamma(G)$
\State $H \gets \texttt{KURATOWSKI\_SUBGRAPH}(G)$
\State $\tilde{H} \gets \texttt{SMOOTHED}(H)$
\For{$\Pi(K) \in \Omega$}\label{algo3:search-base-begin}
    \If{$\Pi(K) \iso \Pi(G)$}
        \State $\Gamma(\tilde{H}) \gets \Gamma(K)$
    \EndIf\label{algo3:search-base-end}
\EndFor
\For{$v \in V$} \label{alg:klein-draw:chain-nodes-start}
    \If{$v \in H$ \textbf{and} $v \notin \tilde{H}$}
        \State $u,w \gets \texttt{GET\_CHAIN\_ENDPOINTS}(v)$
        \State $\gamma_v \gets (\gamma_u+(\gamma_{u\rightarrow w}-\gamma_u)*\frac{P_{uw}.index(v)}{|P_{uw}|})\mod 1$
    \EndIf
\EndFor \label{alg:klein-draw:chain-nodes-end}
\For{$(u,v) \in E$} \label{alg:klein-draw:brides-start}
    \If{$u \in H$ \textbf{and} $v \in H$ \textbf{and} $(u,v) \notin H$}
        \State $\delta_{(u,v)} \gets  \floor{\gamma_{u \rightarrow v}}$ 
    \EndIf
\EndFor \label{alg:klein-draw:bridge-end}
\For{$v \in V$} \label{alg:klein-draw:faces-start}
    \If{$v \notin H$}
        \State $F \gets \texttt{GET\_FACE}(\Pi(G),H,v)$
        \State $\gamma_v \gets (\sum_{u \in F}{\gamma_u/|F|})\mod 1$ \label{alg:mid-face}
    \EndIf
\EndFor \label{alg:klein-draw:faces-end}
\State $\Gamma_0 \gets \Gamma(G)$
\State $\Gamma_1 \gets \texttt{TUTTE}(\Gamma_0)$
\State $i \gets 0$
\While{$\Gamma_i \neq \Gamma_{i+1}$}
    \State $\Gamma_{i+2} \gets \texttt{TUTTE}(\Gamma_{i+1})$ \label{alg:tutte}
    \State $i \gets i+1$
\EndWhile
\end{algorithmic}
\end{algorithm}

At this point we have all the main ingredients for our drawing algorithm. We assume to have the following routines:
\begin{itemize}
    \item \texttt{KURATOWSKI\_SUBGRAPH} which extract a subgraph of $G$ homeomorphic to a Kuratowski subgraph.
    \item \texttt{SMOOTHED} which takes a graph and returns a new one where all nodes of degree 2 have been replaced by an edge linking their two neighbours.
    \item \texttt{GET\_CHAIN\_ENDPOINTS} which takes a node of degree 2 and returns the two endpoints of the chain on which the node is.
    \item \texttt{GET\_FACE} computes, according to the order of the adjacency lists of the already fixed nodes, the face in which a node has to be embedded.
    \item \texttt{TUTTE} which moves all the non-fixed nodes to the barycenter of the positions of its neighbours. See~\cite{TUTTE_PLANAR_ALGO} for more details about Tutte's algorithm.
\end{itemize}

Moreover, the computation of the center of a face (line~\ref{alg:mid-face}) and the \texttt{TUTTE} (line~\ref{alg:tutte}) subroutine are performed according to the edges shifts as said before. For a face, we consider one point of the face in the square and get the coordinates of the other nodes by running through the edges and keeping track of the shifts of the edges. For the \texttt{TUTTE} routine, if a node $v$ would exit the $[0,1]$ square with this routine, its $\gamma_v$ and the shifts of its incident edges would have to be managed accordingly to maintain the coherence of $\Gamma(G)$.

\begin{proposition}
Algorithm~\ref{alg:straight-line-klein} runs in linear time in the number of edges of the input graph $G$.
\end{proposition}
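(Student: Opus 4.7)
The plan is to walk through Algorithm~\ref{alg:straight-line-klein} and bound the cost of each phase by $O(|E|)$, using the fact that $|V|=O(|E|)$ since $G$ is 3-connected.

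For the preprocessing phase, I would invoke a known linear-time Kuratowski subgraph extraction to justify the $O(|E|)$ cost of \texttt{KURATOWSKI\_SUBGRAPH}. The resulting $H$ has at most a constant number of branch vertices (five for $K_5$, six for $K_{3,3}$), so \texttt{SMOOTHED}($H$) produces a $\tilde H$ of bounded size, and the loop over the database in lines~\ref{algo3:search-base-begin}--\ref{algo3:search-base-end} visits only $|\Omega|=13$ embeddings with equivalence tests whose cost depends solely on $|\tilde H|$; hence this whole phase is $O(1)$.

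Next I would bound the three placement loops. The chain-node loop of lines~\ref{alg:klein-draw:chain-nodes-start}--\ref{alg:klein-draw:chain-nodes-end} and the bridge-edge loop of lines~\ref{alg:klein-draw:brides-start}--\ref{alg:klein-draw:bridge-end} each touch every vertex (respectively, every edge) a constant number of times with $O(1)$ arithmetic, for a total of $O(|V|)+O(|E|)=O(|E|)$. For the face-placement loop of lines~\ref{alg:klein-draw:faces-start}--\ref{alg:klein-draw:faces-end} I would precompute all faces of $\rot$ in a single face-walking traversal of the rotation system, in which each oriented dart is visited exactly once, for a total cost of $O(|E|)$; after this, each call to \texttt{GET\_FACE} and the barycentre computation on line~\ref{alg:mid-face} are amortized $O(1)$ per vertex.

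The main obstacle will be bounding the trailing \texttt{TUTTE} iteration. One iteration trivially costs $O(|E|)$, since every non-fixed vertex recomputes its barycentre in time proportional to its degree and the sum of degrees is $2|E|$. Bounding the \emph{number} of iterations is subtler: a naive Jacobi-style fixed point iteration converges at a rate driven by the spectral gap of the graph Laplacian, which may shrink with $|V|$, so the iteration count is not a priori constant. To extract genuine linear time I would either replace the iteration by a direct solve of the sparse $O(|V|)\times O(|V|)$ linear system defined by the barycentre equations using a near-linear-time Laplacian solver, or else appeal to Theorem~\ref{th:tutte-unique} together with the fact that the seed $\Gamma_0$ already lies inside the correct face of $\rot$ to argue that only a bounded number of iterations are needed up to the desired precision. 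This is the step I expect to require the most care, and it is where a careful formulation of the model of computation (exact arithmetic versus fixed precision) is likely to be needed to make the linear-time claim fully rigorous.
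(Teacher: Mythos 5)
Your bound on the first four phases matches the paper's argument almost exactly: linear-time Kuratowski extraction, constant-size $\tilde H$ and $\Omega$ making the base-drawing search $O(1)$, a single pass for chain nodes and bridge edges, and a face-walk plus memoization for \texttt{GET\_FACE}. The one place where you and the paper part ways is the final \texttt{TUTTE} loop, and there you have put your finger on exactly the weak point in the paper's own argument.

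The paper dispatches the Tutte iteration in a single clause: it claims linearity because ``one can stop once each node has been moved at least once.'' That is precisely the truncation you flagged as not obviously correct --- terminating after one sweep gives $O(|E|)$ work, but Tutte's planarity guarantee concerns the exact fixed point of the barycentric system, not the state after one Jacobi sweep, and the paper neither explains why the truncated drawing is still crossing-free nor reconciles the truncation with the loop guard $\Gamma_i\neq\Gamma_{i+1}$ in the pseudocode. Your proposed alternatives (a near-linear Laplacian solve, or an argument that the seed $\Gamma_0$, already inside the correct face of the input rotation system, needs only $O(1)$ refinement sweeps) are more careful routes, though as you acknowledge neither is carried to completion and the model of arithmetic is left open. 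In short: you reproduced the paper's decomposition, and at the one point where the paper is terse you surfaced the gap rather than papering over it --- but this also means your proof, like the paper's, does not fully close that last step.
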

\begin{proof}
The Kuratowski subgraph extraction can be done in linear time (see~\cite{WILLIAMSON_EMBEDING_GRAPH_ON_PLANE,WILLIAMSON_KURATOWSKI_SUBGRAPH} for details). The search for the base drawing (lines \ref{algo3:search-base-begin}-\ref{algo3:search-base-end}) is a comparison with a finite set of small graphs and hence, can be considered in constant time. Then, the drawing of the remaining nodes of the Kuratoswki subgraph (lines~\ref{alg:klein-draw:chain-nodes-start}-\ref{alg:klein-draw:chain-nodes-end}) can be done linearly if we don't recompute the chains each time. The drawing of the edges not present in the Kuratowsi subgraph but whose endpoints are (lines~\ref{alg:klein-draw:brides-start}-\ref{alg:klein-draw:bridge-end}) needs only a run through the boundary of the faces. Finding the faces of the remaining nodes (lines~\ref{alg:klein-draw:faces-start}-\ref{alg:klein-draw:faces-end}) can also be done linearly thanks to memoization. Finally, applying Tutte's algorithm on the non-fixed nodes is also linear, one can stop once each node has been moved at least once to ensure linearity. Thus, the total complexity of this algorithm is therefore in $O(n+m)$ with $n$ the number of nodes and $m$ the edges.
\end{proof}

\section{Conclusions}

We have highlighted the interest of having an algorithm for constructing straight-line representations of graphs on the Klein bottle. We proposed an algorithm to compare two general rotation systems and another one to enumerate all the possible ones of a given graph on a given surfaces, the latter using notions coming from the domain of signed graphs to be slightly more optimized than a naive one. We presented also and above all an algorithm to draw a graph, given its rotation system, and build a representation equivalent to it.

This work can be extended along several directions. The first one consists in characterizing more classes of graphs which are non-toroidal but embeddable in the Klein bottle. Another interesting theoretical question would be to have more tight bounds for the number of distinct unlabelled embeddings of a graph on the Klein bottle or on other surfaces, following on from works already done on similar questions~\cite{CHAPUY_COUNTING_UNICELLULAR_MAPS,GROSS_ENUMERATING_GRAPH_EMBEDDINGS}. Finally, one could study if our algorithm can be extended to surfaces of higher genera or if the flat representation of these surfaces, with more sided polygons, would lead to insoluble problems.

\bibliographystyle{plain}
\bibliography{main}

\begin{thebibliography}{10}

\bibitem{ALON_SIGNED_GRAPH_BOOK}
Noga Alon and Joel~Harold Spencer.
\newblock {\em The probabilistic method}.
\newblock John Wiley \& Sons, 2016.

\bibitem{CHAPUY_COUNTING_UNICELLULAR_MAPS}
Olivier Bernardi and Guillaume Chapuy.
\newblock Counting unicellular maps on non-orientable surfaces.
\newblock {\em Advances in Applied Mathematics}, 47(2):259--275, 2011.

\bibitem{DEMOUCRON_MALGRANGE_PERTUISET_DMP}
Georges Demoucron, Yves Malgrange, and R.~Pertuiset.
\newblock Graphes planaires.
\newblock {\em Rev. Fran\c{c}aise Recherche Op\'{e}rationnelle}, 8:33--47,
  1964.

\bibitem{FARY_STRAIGHTLINE_PLANAR}
István Fáry.
\newblock On straight-line representation of planar graphs.
\newblock {\em Acta Sci. Math.}, pages 229--233, 1948.

\bibitem{GROSS_ENUMERATING_GRAPH_EMBEDDINGS}
Jonathan~L Gross and Thomas~W Tucker.
\newblock Enumerating graph embeddings and partial-duals by genus and euler
  genus.
\newblock {\em Enumer. Combin. Appl}, 1(1), 2020.

\bibitem{JUVAN_MOHAR_ALGORITHM_TORUS_N3}
Martin Juvan and Bojan Mohar.
\newblock An algorithm for embedding graphs in the torus.
\newblock {\em preprint}, 2:34--49, 1998.

\bibitem{KOCAY_STRAIGHTLINE_TORUS}
William Kocay, Daniel Neilson, and Ryan Szypowski.
\newblock Drawing graphs on the torus.
\newblock {\em Ars Combinatoria}, 59(2):259--277, 2001.

\bibitem{MCKAY_GRAPH_ISOMORPHISM}
Brendan~Damien McKay and Adolfo Piperno.
\newblock Practical graph isomorphism, ii.
\newblock {\em Journal of Symbolic Computation}, 60:94--112, 2014.

\bibitem{MOHAR_FARY_THEOREM_ON_OTHER_SURFACES}
Bojan Mohar.
\newblock Straight-line representations of maps on the torus and other flat
  surfaces.
\newblock {\em Discrete Mathematics}, 155(1-3):173--181, 1996.

\bibitem{MOHAR_GRAPH_ON_SURFACES}
Bojan Mohar and Carsten Thomassen.
\newblock {\em Graphs on surfaces}.
\newblock 2001.

\bibitem{MYRVOLD_TORUS_OBSTRUCTIONS}
Wendy~Joanne Myrvold and Jennifer Woodcock.
\newblock A large set of torus obstructions and how they were discovered.
\newblock {\em Electron. J. Comb.}, 25:1, 2018.

\bibitem{READ_STRAIGHTLINE_PLANAR}
Ronald~Cedric Read.
\newblock A new method for drawing a planar graph given the cyclic order of the
  edges at each vertex.
\newblock {\em Congressus Numerantium}, 56:31--44, 1987.

\bibitem{RISKIN_NONEMBEDDABILITY_TORUS_KLEIN}
Adrian Riskin.
\newblock On the nonembeddability and crossing numbers of some toroidal graphs
  on the klein bottle.
\newblock {\em Discrete Mathematics}, 234(1-3):77--88, 2001.

\bibitem{TUTTE_PLANAR_ALGO}
William~Thomas Tutte.
\newblock How to draw a graph.
\newblock {\em Proceedings of the London Mathematical Society},
  S3-13(1):743--767, 1963.

\bibitem{WAGNER_BEFORE_FARY}
Klaus Wagner.
\newblock Bemerkungen zum vierfarbenproblem.
\newblock {\em Jahresbericht der Deutschen Mathematiker-Vereinigung},
  46:26--32, 1936.

\bibitem{WILLIAMSON_EMBEDING_GRAPH_ON_PLANE}
Stanley~Gill Williamson.
\newblock Embedding graphs in the plane—algorithmic aspects.
\newblock In {\em Annals of Discrete Mathematics}, volume~6, pages 349--384.
  Elsevier, 1980.

\bibitem{WILLIAMSON_KURATOWSKI_SUBGRAPH}
Stanley~Gill Williamson.
\newblock Depth-first search and kuratowski subgraphs.
\newblock {\em Journal of the ACM (JACM)}, 31(4):681--693, 1984.

\end{thebibliography}
\end{document}